\tikzset{
    >=stealth',
    punkt/.style={
           rectangle,
           rounded corners,
           draw=black, very thick,
           text width=6.5em,
           minimum height=2em,
           text centered},
    empty/.style={
           rectangle,
           rounded corners,
           draw=white, very thick,
           minimum height=0em},
}
\newcommand{\Alice}{\textit{Alice}}
\newcommand{\Bob}{\textit{Bob}}
\newcommand{\Seller}{\textit{Seller}}
\newcommand{\Quote}{\textit{quote}}
\newcommand{\Title}{\textit{title}}
\newcommand{\Date}{\textit{date}}
\newcommand{\ok}{\textit{ok}}
\newcommand{\quit}{\textit{quit}}
\newcommand{\home}{\textit{homeAddress}}
\newcommand{\office}{\textit{officeAddress}}
\newcommand{\ISBN}{\textit{ISBN}}
\newcommand{\as}{\textit{as}}
\newcommand{\ab}{\textit{ab}}
\newcommand{\bs}{\textit{bs}}
\newcommand{\only}{\textit{only}}
\newcommand{\true}{\textit{true}}
\newcommand{\false}{\textit{false}}
\newcommand{\bool}{\textit{bool}}
\newcommand{\sid}{\textit{sid}}
\newcommand{\pid}{\textit{pid}}
\newcommand{\dom}{\textit{dom}}
\newcommand{\fn}{\textit{fn}}
\newcommand{\fv}{\textit{fv}}
\newcommand{\If}{\textsf{if}}
\newcommand{\Then}{\textsf{then}}
\newcommand{\Else}{\textsf{else}}
\newcommand{\End}{\textsf{end}}
\newcommand{\Int}{\textit{int}}
\newcommand{\String}{\textit{string}}
\newcommand{\Nat}{\textit{nat}}
\newtheorem{example}{Example}
\newtheorem{remark}{Remark}
\newtheorem{definition}{Definition}
\newtheorem{theorem}{Theorem}
\newtheorem{lemma}{Lemma}
\title{Probabilities in Session Types}
\author{Bogdan Aman \qquad\qquad Gabriel Ciobanu
\institute{Romanian Academy, Institute of Computer Science, Ia\c si, Romania}
\institute{Alexandru Ioan Cuza University of Ia\c si, Faculty of Computer Science, Romania}
\email{bogdan.aman@iit.academiaromana-is.ro \qquad gabriel@info.uaic.ro}
}
\begin{document}
\maketitle

\begin{abstract}
This paper deals with the probabilistic behaviours of distributed systems 
described by a process calculus considering both probabilistic internal 
choices and nondeterministic external choices.
For this calculus we define and study a typing system which extends the 
multiparty session types in order to deal also with probabilistic behaviours.
The calculus and its typing system are motivated and illustrated by a 
running example.
\end{abstract}

\section{Introduction}

Probabilities allow uncertainty to be described in quantitative terms. 
If there are no uncertainties about how a system behaves, then its expected
behaviour has a 100\% chance of occurring, while any other behaviour would 
have no chance (i.e.,~0\% chance). Regarding the possible behaviours of 
a system, people working in artificial intelligence have used 
probability distributions over a set of events \cite{Halpern03}. In such 
an approach, the probabilities assigned to behaviours are real numbers 
from $[0,1]$ rather than values in $\{0,1\}$. In~\cite{Cooper14}, the authors 
made explicitly the assumption that probabilities are distributed over a 
restricted set of events, each of them corresponding to an equivalence 
class of events. We adapt these ideas to the framework of multiparty 
session types, and introduce probabilities assigned to actions and label 
selections.

An important feature of a probabilistic model is given by the distinction 
between nondeterministic and probabilistic choices~\cite{Segala95}. The 
nondeterministic choices refer to the choices made by an external 
process, while probabilistic choices are choices made internally by the 
process (not under control of an external process). Intuitively, a 
probabilistic choice is given by a set of alternative transitions, where 
each transition has a certain probability of being selected; moreover, 
the sum of all these probabilities (for each choice) is~$1$. To clarify 
the difference between nondeterministic and probabilistic choices, we 
consider a variant of the {\it two-buyers-seller} protocol \cite{Honda16} 
depicted in Figure \ref{diagram}.
Two buyers (\Alice\ and \Bob) wish to buy an expensive book (out of 
several possible ones) from a \Seller\ by combining their money (in 
various amounts depending on the amount of cash \Alice\ is willing to 
pay). The communications between them can be described in several steps. 
Firstly, \Alice\ sends (out of several choices) a book \Title\ (a string) 
or an \ISBN\ (a number) to the \Seller. The fact that \Alice\ chooses which 
book she wants to buy by sending the book title or book \ISBN\ is an 
example of a probabilistic choice, because it is under her control and 
preference (this is why in Figure \ref{diagram} we added probabilities to 
the possible choices of \Alice\ regarding the book). Then, \Alice\ waits 
for an answer regarding the quote of the book. This is a nondeterministic 
choice, because the choice of the answer received by \Alice\ is out of 
her control. This is due to the fact that the \Seller\ may provide 
different quotes depending on the buying history of \Alice\ and existing 
discounts. Next, \Seller\ sends back a \Quote\ (an integer) to \Alice\ and 
\Bob. \Alice\ tells \Bob\ how much she can contribute (an integer). 
Depending on the contribution of \Alice, \Bob\ notifies \Seller\ whether 
it accepts the quote or not. If \Bob\ accepts, he sends his home or 
office address (a string), and awaits from the \Seller\ a delivery date 
when the requested book will be received.

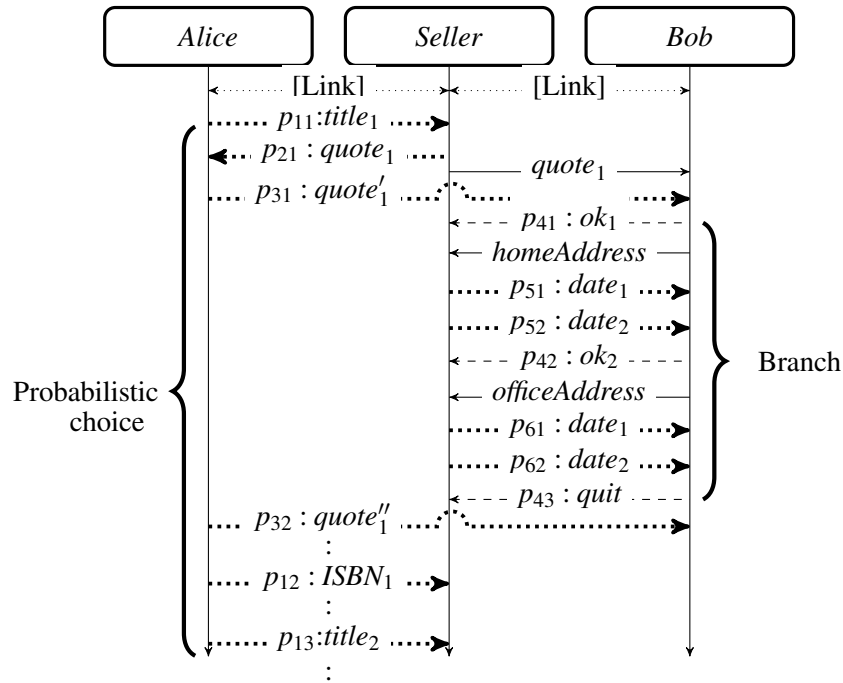
\begin{figure}[h]
\centering
\begin{tikzpicture}[scale=0.8,node distance=1cm, auto,]
 \node[punkt] (alice) at (0,0.5) {$\Alice$};
  \node[empty] (aliceDummy) at (0,-10.0) {};
 \node[punkt] (seller) at (4,0.5) {$\Seller$};
 \node[empty] (sellerDummy) at (4,-10.0) {};
  \node[punkt] (bob) at (8,0.5) {$\Bob$};
  \node[empty] (bobDummy) at (8,-10.0) {};
  \draw[->](alice.south)-- (aliceDummy.north);
  \draw[->](bob.south)-- (bobDummy.north);
  \draw[->](seller.south)-- (sellerDummy.north);
	
  	\draw[<->, dotted] (0.0,-0.4) -- node[midway,fill=white,yshift=-0.25cm] {[Link]} (4,-0.4);
  	\draw[<->, dotted] (4.0,-0.4) -- node[midway,fill=white,yshift=-0.25cm] {[Link]} (8,-0.4);
	
    \draw[->, line width=0.5mm, dotted] (0.0,-0.95) -- node[midway,fill=white,yshift=-0.25cm] {$p_{11}$:$\Title_1$} (4,-0.95);   

      \draw[<-, line width=0.5mm, dotted] (0.0,-1.5) -- node[midway,fill=white,yshift=-0.25cm] {$p_{21}:\Quote_1$} (4,-1.5); 
      \draw[->, solid] (4.0,-1.75) -- node[midway,fill=white,yshift=-0.25cm] {$\Quote_1$} (8,-1.75);
     \draw[->, line width=0.5mm, dotted] (0.0,-2.2) -- node[midway,fill=white,yshift=-0.25cm] {$p_{31}:\Quote'_1$} (3.8,-2.2) arc(180:0:0.25cm)
  -- (8,-2.2); 
        \draw[<-, dashed] (4.0,-2.6) -- node[midway,fill=white,yshift=-0.25cm] {$p_{41}:\ok_1$} (8,-2.6);
       \draw[<-, solid] (4.0,-3.1) -- node[midway,fill=white,yshift=-0.25cm] {$\home$} (8,-3.1);
       \draw[->, line width=0.5mm, dotted] (4.0,-3.75) -- node[midway,fill=white,yshift=-0.25cm] {$p_{51}:\Date_1$} (8,-3.75);
       \draw[->, line width=0.5mm, dotted] (4.0,-4.35) -- node[midway,fill=white,yshift=-0.25cm] {$p_{52}:\Date_2$} (8,-4.35);
  \begin{scope}[yshift=-2.15cm]
          \draw[<-, dashed] (4.0,-2.75) -- node[midway,fill=white,yshift=-0.25cm] {$p_{42}:\ok_2$} (8,-2.75);
       \draw[<-, solid] (4.0,-3.35) -- node[midway,fill=white,yshift=-0.25cm] {$\office$} (8,-3.35);
       \draw[->, line width=0.5mm, dotted] (4.0,-3.9) -- node[midway,fill=white,yshift=-0.25cm] {$p_{61}:\Date_1$} (8,-3.9);
       \draw[->, line width=0.5mm, dotted] (4.0,-4.5) -- node[midway,fill=white,yshift=-0.25cm] {$p_{62}:\Date_2$} (8,-4.5);
       \end{scope}
               \draw[<-, dashed] (4.0,-7.2) -- node[midway,fill=white,yshift=-0.25cm] {$p_{43}:\quit$} (8,-7.2);
  
  \draw[ultra thick, decoration={brace, raise=5pt,  amplitude=3mm}, decorate] (8,-2.6) -- (8,-7.2) node[midway,xshift=0.75cm] {Branch};
  
     \draw[->, line width=0.5mm, dotted] (0.0,-7.65) -- node[midway,fill=white,yshift=-0.25cm] {$p_{32}:\Quote''_1$} (3.8,-7.65) arc(180:0:0.25cm)
  -- (8,-7.65);   
  \node[empty] (skip1) at (2.0,-8.0) {$:$};

\begin{scope}[yshift=-7.2cm]
    \draw[->, line width=0.5mm, dotted] (0.0,-1.4) -- node[midway,fill=white,yshift=-0.25cm] {$p_{12}:\ISBN_1$} (4,-1.4);
      \node[empty] (skip2) at (2.0,-1.8) {$:$};
    \draw[->, line width=0.5mm, dotted] (0.0,-2.4) -- node[midway,fill=white,yshift=-0.25cm] {$p_{13}$:$\Title_2$} (4,-2.4); 
      \node[empty] (skip3) at (2.0,-2.9) {$:$};
\end{scope}    
    
\draw[ultra thick, decoration={brace, raise=5pt, mirror, amplitude=3mm}, decorate] (0.0,-1.0) -- (0.0,-9.8) node[midway,xshift=-2.75cm] {Probabilistic} node[midway,xshift=-2cm,yshift=-0.4cm] {choice};

\end{tikzpicture}
\vspace{-2ex}\caption{Dotted lines stand for probabilistic choices $p_{ij}$, dashed lines for branching, 
solid lines for deterministic choices, while double headed dotted lines for 
session initialization.}\label{diagram}

\vspace{-2ex}\end{figure}

One goal of the current research lines is to use a formal approach to 
describe in a rigorous way how distributed systems should behave, and then 
to design these systems properly in order to satisfy the behavioural constraints. 
In the last few years the focus has moved towards the quantitative study of the 
distributed systems behaviour to be able to solve problems that are not solvable 
by deterministic approaches (e.g., leader election problem~\cite{DengBook14}).

Probabilistic modelling is usually used to represent and quantify 
uncertainty in the study of distributed systems. Several probabilistic 
process calculi have been considered in the literature: probabilistic 
CCS~\cite{Hansson94}, probabilistic CSP~\cite{Lowe93}, probabilistic 
ACP~\cite{Andova99}, probabilistic asynchronous $\pi$-calculus~\cite{Herescu00}, 
PEPA~\cite{Hillston96}. The basic idea of these probabilistic process 
calculi is to include a probabilistic choice operator. Essentially, there 
are two possibilities of extending such an approach: either to replace 
nondeterministic choices by probabilistic choices, or to allow both 
probabilistic and nondeterministic~choices.

In this paper we consider the second alternative, and allow probabilistic 
choices made internally by the communicating processes (sending a value 
or a label), and also nondeterministic choices controlled by an external 
process (receiving a value or a label). Notice that in our 
operational semantics we impose that for each received value/label, the 
continuation of a nondeterministic choice is unique; thus, the 
corresponding execution turns out to be completely deterministic. 
We use a probabilistic extension of the process calculus presented in 
\cite{Honda16}, a calculus which is also an extension of the 
$\pi$-calculus~\cite{Milner99} for which the papers~\cite{Herescu00,Varacca07} 
present a probabilistic approach. For this calculus we define and study a 
typing system by extending the multiparty session types with both 
nondeterministic and probabilistic~behaviours.

Session types \cite{Honda93,Takeuchi94} and multiparty session 
types~\cite{Honda16} provide a typed foundation for the design of 
communication-based systems. The main intuition behind session types is 
that a communication-based application exhibits a structured sequence of 
interactions. Such a structure is abstracted as a type through an 
intuitive syntax which is used to validate programs. Session types are 
terms of a process algebra that also contains a selection construct (an 
internal choice among a set of branches), a branching construct (an 
external choice offered to the environment) and recursion. Session types 
are able to guarantee several properties in a session: (i) interactions 
never lead to a communication error (communication safety); (ii) channels 
are used linearly (linearity) and are deadlock-free (progress); (iii) the 
communication sequence follows a declared scenario (session fidelity, 
predictability).

While many communication patterns can be captured through such sessions, 
there are cases where basic multiparty session types are not able to 
capture interactions which involve internal probabilistic choices of the 
participants. Probabilities are used in the design and verification of 
complex systems in order to quantify unreliable or unpredictable behaviour, 
but also taken also into account when analyzing quantitative properties 
(measuring somehow the success level of the protocol). Overall, we study 
the nondeterministic and probabilistic choices in the framework of 
multiparty session types in order to understand better the quantitative 
aspects of uncertainty that might arise in communicating processes.

In the following, Section \ref{section:multipartyprocesses} presents the 
syntax and semantics of our probabilistic process calculus, and motivates 
the key ideas by using the two-buyers-seller protocol. 
Section~\ref{section:multipartysession} explains the global and local 
types, and the connection between them. Section \ref{section:typingsystem} 
describes the new typing system and presents the main results. 
Section~\ref{section:conclusion} concludes and discusses some related 
probabilistic approaches involving typing systems.

\section{Probabilistic Multiparty Session Processes}
\label{section:multipartyprocesses}

The most natural way to define a probabilistic extension of a process calculus 
consists of adding probabilistic information to some actions \cite{Glabbeek95}.
Probabilities are not attached to some actions, while others have 
probabilities (see \cite{Varacca07}).
When modelling the probabilistic behaviour of a distributed system, we 
should be able to model the fact that either the system or the environment 
chooses between several alternative behaviours. Moreover, when modelling 
such a system we should avoid to `approximate' the nondeterministic choice by a 
probabilistic distribution (very often a uniform distribution is used). 
For these reasons, we define a probabilistic extension of the process 
calculus used in \cite{Honda16} that combines both nondeterministic and 
probabilistic behaviours.
We actually define a calculus that puts together probabilistic internal 
choices (sending a value and selecting a label) with nondeterministic 
external choices (receiving a value and branching a process by using a 
selected value). In this setting, the nondeterministic actions of a process 
use information (values and labels) provided only by probability~actions.
The type system for this calculus is inspired from the synchronous 
multiparty session types \cite{Bejleri09}. As far as we know, our approach 
is new among the existing models used to formalize multiparty processes in 
the framework of multiparty session types.

\subsection{Syntax}

In what follows we use our variant of the two-buyers-seller protocol to 
illustrate some of the syntactic constructs defined afterwards.

\begin{example}\label{example1}

Let us note that the book to buy represents the choice of \Alice, and so 
she sends the title of a book (a string) or an \ISBN\ (a ten digit number). 
Since this is under her control and preference, it represents an example 
of a probabilistic choice.

$\Alice = 0.3:as!\langle ``\textit{War and Peace}''\rangle; \Alice_1$ 
$+$ $0.5:as!\langle ``\textit{The Art of War}''\rangle;\Alice_2$

\hspace{5ex}$+$ $0.2:as!\langle \textit{0195014766}\rangle;\Alice_3$. 

\noindent 
Here `$as$' denotes the channel used for the communication between \Alice\ 
and the \Seller. Actually, channels `$\as$' and `$\ab$' are used by \Alice\ 
to communicate with \Seller\ and with \Bob, while channel `$\bs$' is used by 
\Bob\ to communicate with \Seller.
We denote by \Alice$_i$ ($1\leq i \leq 3$) the different 
behaviours of \Alice\ after she sent her book choice. We use this index 
notation to keep track of the behaviours for each participant, to 
simplify the syntax and make it easier to read. The detailed description 
of all participants can be found in Example \ref{example2}.

When receiving the book orders, the \Seller\ expects the buyers 
sending him either a string representing a \Title\ of the book or a number 
representing an \ISBN. This behaviour is nondeterministic depending 
on the received information: \quad 
$Seller = \as?(\Title:\String); \Seller_1$ $+$ $\as?(\ISBN:\Nat);\Seller_2$. 
\end{example}

\noindent 
Informally, a session is a series of interactions between multiple parties 
serving as a unit of conversation. A session is established via a shared 
name representing a public interaction point, and consists of series of 
communication actions performed on fresh session channels. The syntax for 
processes is based on user-defined processes \cite{Honda16} extended with 
probabilistic choices. The syntax is presented in Table \ref{table:syntax}, 
where we use: probabilities $p_1$, $p_2$, $\ldots$; shared names $a$, $b$, 
$n$, $\ldots$ and session names $x$, $y$, $\ldots$; channels $s$, $t$, 
$\ldots$; expressions $e$, $e_i$, $\ldots$; labels $l$, $l_i$, $\ldots$, 
participants $q$, $\ldots$. We use symbols $q$ to name participants despite 
the fact that they are in reality numbers.

\begin{table}[h!]
\vspace{-1ex}  \centering
  \begin{tabular}{l@{\hspace{1ex}}lll@{\hspace{-2ex}}r}
   \hline
\vspace{1ex}
    {\it Processes} & $P$ & $::=$ & $\overline{a}[{\rm n}](\tilde{s}).P$   & (multicast session request)\\
    &  & $\shortmid$ & $a[q](\tilde{s}).P$   & (session acceptance)\\
    &  & $\shortmid$ & $\displaystyle \sum_{i\in I}p_i:s!\langle \tilde{e_i} \rangle; P_i$   & (value sending)\\
    &  & $\shortmid$ & $\displaystyle\sum_{i\in I}s? (\tilde{x_i}:\tilde{S_i}); P_i \qquad (\tilde{S_k}\neq \tilde{S_t}$, for $k,t \in I$, $k\neq t$)   & (value reception)\\
    &  & $\shortmid$ & $s!\langle\langle \tilde{s} \rangle\rangle; P$   & (session delegation)\\  
    &  & $\shortmid$ & $s?((\tilde{s})); P$   & (session reception)\\ 
    &  & $\shortmid$ & $\displaystyle\sum_{i\in I}p_i:s \lhd l_{i}; P_{i}$ $\qquad (l_k\neq l_t$, for $k,t \in I$, $k\neq t)$   & (label selection)\\ 
    &  & $\shortmid$ & $s \rhd \{l_{i}: P_{i}\}_{i \in I}$ $\qquad (l_k\neq l_t$, for $k,t \in I$, $k\neq t)$  & (label branching)\\
    &  & $\shortmid$ & \If\ $e$ \Then\ $P$ \Else\ $Q$   & (conditional branch)\\  
    & & $\shortmid$ & $P \mid Q$   & (parallel)\\
    & & $\shortmid$ & ${\bf 0}$   & (inaction)\\
    & & $\shortmid$ & $(\nu n)P$   & (hiding)\\   
    & & $\shortmid$ & $\mu X.P$   & (recursion)\\
    & & $\shortmid$ & $X$   & (variable)\\
    {\it Expressions} & $e$ & $::=$ & $v \shortmid e$ and $e' \shortmid$ not $e$ $\shortmid \ldots$ \\  
    {\it Values} & $v$ & $::=$ & $a \shortmid true \shortmid false$ $\shortmid \ldots$ \\  

    {\it Sorts} & $S$ & $::=$ & $\textit{bool} \mid \textit{nat} \mid \ldots$ & (value types)\\
    \hline 
   \end{tabular}
\vspace{-1ex}\caption{Syntax}\label{table:syntax}
\vspace{-2ex}\end{table} 

Excepting the primitives for value sending, value receiving and label 
selection, all the other constructs are from~\cite{Honda16}. The process 
$\overline{a}[{\rm n}](\tilde{s}).P$ sends along channel $a$ a request to 
start a new session using the channels~$\tilde{s}$ with participants $1$ 
$\ldots$ ${\rm n}$, where it participates as $1$ and continues as $P$. 
Its dual $a[q](\tilde{s}).P$ engages in a new session as participant~$q$. 
The communications taking place inside an established session are performed 
using the next six primitives: sending/receiving a value, session 
delegation/reception, and selection/branching. By using the 
delegation/reception pair, a process delegates to another one the 
capability to participate in a session by passing the channels associated 
with the session. The conditional branching establishes the continuation 
of an evolution based on the truth value of an expression~$e$. It is worth 
mentioning that the internal choices (sending a value and selecting a 
label) are probabilistically chosen, while the receiving values represent 
a nondeterministic choice (as external choice). The conditional branch, 
parallel and inaction are standard. A sequence of parallel composition is 
written~$\Pi_i P_i$. The syntax $(\nu n)P$ makes the name $n$ local to 
$P$. Interaction which can be repeated unboundedly is realized by 
recursion; as in~\cite{Bocchi14}, we do not use arguments when defining 
recursion. We often omit writing ${\bf 0}$ at the end of processes (e.g., 
$s!\langle \tilde{e_i} \rangle; {\bf 0}$ is written as $s!\langle 
\tilde{e_i} \rangle$).

The notions of identifiers (bound and free), process variables (bound and 
free), channels, alpha equivalence $\equiv_{\alpha}$ and substitution are 
standard. The bound identifiers are $\tilde{s}$ in multicast session 
request, session acceptance and session reception, $\tilde{x}_j$ in value 
reception and $n$ in hiding, while the bound process variable is $X$ in 
recursion. $\fv(P)$ and $\fn(P)$ denote the sets of free process variables 
and free identifiers of $P$, respectively.

\subsection{Operational Semantics}
Structural equivalence for processes is the least equivalence relation 
satisfying the following equations: 

\begin{table}[ht]
\vspace{-2ex} \centering
 \begin{tabular}{c}

\vspace{0.5ex}
$P \mid {\bf 0} \equiv P$ \qquad $P \mid Q \equiv Q \mid P$ \qquad $(P \mid Q)\mid R \equiv P \mid (Q \mid R)$\\

$(\nu n)P \mid Q \equiv (\nu n)(P \mid Q)$ if $n \notin \fn(Q)$\quad

$(\nu n)(\nu n') P \equiv (\nu n')(\nu n)P$\\

$(\nu n) {\bf 0} \equiv {\bf 0}$ \qquad $\mu X.{\bf 0}\equiv {\bf 0}$\qquad

$p_i: P + p_j : Q \equiv  p_j: Q + p_i : P$ \qquad $P + Q \equiv  Q + P$.\\

\end{tabular}
\vspace{-2ex}
\end{table}

We define the operational semantics in such a way that it distinguishes 
between probabilistic choices made internally by a process and 
nondeterministic choices made externally. This distinction allows us to 
reason about the evolutions of the system in which the nondeterministic 
actions of a process use only the data send by the probability actions. 
The operational semantics is given by a reduction relation denoted by 
$P\rightarrow_r Q$ (meaning $P$ reduces to $Q$ with probability~$r$) 
representing the smallest relation generated by the rules of Table 
\ref{table:semantics}\; (where $e \downarrow v$ means that expression $e$ is 
evaluated to value $v$).

\begin{table}[ht]
\vspace{-1ex}  
\centering
  \begin{tabular}{@{\hspace{0ex}}l@{\hspace{1ex}}c@{\hspace{0ex}}}
   \hline
\vspace{1ex}
$\overline{a}[{\rm n}](\tilde{s}).P_1 \mid \Pi_{q\in \{2..{\rm n}\}}a[q](\tilde{s}).P_q$ $\rightarrow_1 (\nu \tilde{s}) \Pi_{q\in \{1..{\rm n}\}}P_q$ & ({\sc Link})\\


$\displaystyle\sum_{i\in I}p_i: s!\langle \tilde{e_i} \rangle;P_i \mid \sum_{j\in J}s? (\tilde{x_j}:\tilde{S_j});P_j \rightarrow_{p_i} P_i \mid  P_j\{\tilde{v_i}/\tilde{x_j}\}$ ($\tilde{e_i} \downarrow \tilde{v_i},~ \tilde{v_j}:\tilde{S_j}$) & ({\sc Com}) \\

 


$s!\langle\langle \tilde{s} \rangle\rangle; P \mid s?((\tilde{s}));Q \rightarrow_{1} P \mid  Q$  & ({\sc Deleg}) \\


$\displaystyle\sum_{i\in I}p_i:s\lhd l_{i};P_{i} \mid s\rhd \{l_{j}:P_{j}\}_{j\in J}  \rightarrow_{p_i} P_i \mid P_j $ ($j \in J$)
 & ({\sc Label}) \\
 
 

\If\ $e$ \Then\ $P$ \Else\ $Q$ $\rightarrow_1$ $P$ ($e \downarrow true$) & ({\sc IfT})\\
\If\ $e$ \Then\ $P$ \Else\ $Q$ $\rightarrow_1$ $Q$ ($e \downarrow false$) & ({\sc IfF})\\
$\mu X.P \rightarrow_1 P\{\mu X.P/X\}$ & ({\sc Call})\\
$P \rightarrow_p P'$ implies $(\nu n)P \rightarrow_p (\nu n)P'$ & ({\sc Scope})\\
$P \rightarrow_p P'$ and $Q \not\rightarrow$ implies $P\mid Q \rightarrow_p P' \mid Q$ & ({\sc Par1})\\
$P \rightarrow_{p} P'$ and $Q \rightarrow_{q}Q' $ implies $ P\mid Q \rightarrow_{p \cdot q} P' \mid Q'$ & ({\sc Par2})\\
$P\equiv P'$ and $P' \rightarrow_p Q'$ and $Q' \equiv Q' $ implies $ P \rightarrow_p Q$ & ({\sc Struct})\\

\hline 
\end{tabular}
\vspace{-1ex}\caption{Operational Semantics}
\label{table:semantics}
\vspace{-2ex}\end{table}

Rule ({\sc Link}) describes a session initiation among ${\rm n}$ parties, 
generating $|\tilde{s}|$ fresh multiparty session channels. For simplicity, 
we consider that this rule has probability $1$; a normalization based on 
the possible reachable processes in one step is eventually needed (as done 
in \cite{Glabbeek95}). Rules ({\sc Com}), ({\sc Deleg}) and ({\sc Label}) 
are used to communicate values, session channels and labels. The values to 
be communicated in the rules ({\sc Com}) and ({\sc Label}) are chosen 
probabilistically, a fact illustrated by adding the probability of 
the consumed action to the transition of the reduction relation.
In both rules ({\sc Com}) and ({\sc Label}), the choice of the continuation 
process to be executed after sending or selecting is probabilistic, while 
when receiving or branching is nondeterministic. Inspired by 
\cite{Aldini00}, we add the conditions $\tilde{S_k}\neq \tilde{S_t}$ 
(meaning that the types of $\tilde{x_k}$ and $\tilde{x_t}$ are different ) 
and $l_k\neq l_t$ (meaning that $l_k$ and $l_t$ are different) in the rules 
({\sc Com}) and ({\sc Label}) to indicate that each value 
and label leads to a unique continuation. This means that a process of the 
form $s?(x:S);P + s?(x:S).Q$ is not possible in our syntax. Thus, the 
probability of the transition is equal with the probability of the 
sending/selecting process. 
It could be noticed from ({\sc Link}), ({\sc Com}) and ({\sc Label}) that 
the calculus is synchronous; this choice is made in order to simplify the 
presentation.

The rules ({\sc IfT}) and ({\sc IfF}) choose which branch to take
depending on the truth value of $e_i$. The rules ({\sc Scope}) and ({\sc 
Struct}) are standard. Rule ({\sc Par2}) is used to compose the evolutions 
of parallel processes, while rule ({\sc Par1}) is used to compose 
concurrent processes that are able to evolve with processes that are not 
able to evolve. In rule ({\sc Par1}), $Q \not\rightarrow$ means that the 
process $Q$ is not able to evolve by means of any rule (we say that $Q$ is 
a stuck process). Negative premises are used to denote the fact that 
passing to a new step is performed based on the absence of actions. The use 
of negative premises does not lead to an inconsistent set of rules. 
The following example illustrates how and when the rule ({\sc Par1}) is used.

\begin{example}[cont.]\label{example1bis}

Let us consider the process $P=\Alice \mid \Seller \mid \Bob$, where \Alice\ and \Seller\ have the definitions from Example \ref{example1}, while \Bob\ can have any form. By applying a ({\sc Com}) rule, we could 
have

\centerline{$\Alice \mid \Seller \rightarrow_{0.2} \Alice_3 \mid \Seller_2\{\textit{0195014766}/\ISBN\}$}

In order to illustrate the evolution of $P$, we need to add also \Bob\ to 
the above reduction. Notice that during this step \Bob\ is not able to 
interact neither with \Alice\ nor with \Seller. This is done by using the 
rule~({\sc Par1}), and so obtaining \quad
$P \rightarrow_{0.2} \Alice_3 \mid \Seller_2\{\textit{0195014766}/\ISBN\} \mid \Bob$ .

\end{example}

\begin{example}[cont.]\label{example2}
Let us consider an instance of the two-buyers-seller protocol in which
\Alice\ wants to buy one of the following two books:
\begin{itemize}
\item \textit{Title}: ``\textit{War and Peace}'' / \ISBN: $\textit{0140447938}$;
\item \textit{Title}: ``\textit{The Art of War}'' / \ISBN: $\textit{0195014766}$.
\end{itemize}
Firstly, \Alice\ sends to \Seller a book identifier (\Title\ or \ISBN), namely 
with probability~$0.3$ the book title ``\textit{War and Peace}'', with 
probability $0.5$ the book title ``\textit{The Art of War}'', and with 
probability $0.2$ the ISBN $\textit{0195014766}$ of the latter book. 
Then \Alice\ waits for \Seller\ to send a \Quote\ to both her and \Bob .
\Alice\ tells \Bob\ how much she can contribute (based on certain 
probabilities and the book she actually wants). For example, for the book 
``\textit{War and Peace}'' she is willing to participate with either 
$\Quote/2$ or $\Quote/3$ with the same probability $0.5$. We now describe 
formally the behaviour of \Alice\ as a process:

$Alice \stackrel{def}{=} \overline{a}[3] (\ab,\as,\bs).$

\hspace{6ex}$0.3:\as!\langle ``\textit{War and Peace}''\rangle;\as?(\Quote:\Nat);$

\hspace{8ex}$0.5:\ab!\langle \Quote/2 \rangle.P_1+0.5:\ab!\langle \Quote/3\rangle .P_1$

\hspace{4ex}$+$ $0.5:\as!\langle ``\textit{The Art of War}''\rangle;\as?(\Quote: \Nat);$

\hspace{8ex}$0.4:\ab!\langle \Quote/2\rangle.P_1+0.2:\ab!\langle \Quote/3\rangle.P_1+0.4:\ab!\langle \Quote/4\rangle.P_1$

\hspace{4ex}$+$ $0.2:\as!\langle \textit{0195014766}\rangle;as?(\Quote:\Nat);$

\hspace{8ex}$0.4:\ab!\langle \Quote/2\rangle.P_1+0.2:\ab!\langle \Quote/3\rangle.P_1+0.4:\ab!\langle \Quote/4\rangle.P_1$

\noindent Notice that the price options for the second book (searched 
either by \Title\ or \ISBN) are the same; however, this is just a coincidence
and not a requirement in our calculus. By using probabilities, it is 
possible to describe executions that may return different prices for the same 
title sold by the same \Seller, but possibly printed by different publishers.

Using this process (behaviour) of \Alice, we can find answers to questions like:

\begin{itemize}
\item What is the probability that \Alice\ buys ''\textit{The~Art~of~War}`` with $quote/3$?

This means that \Alice\ needs to execute

\centerline{$~~~~~~~~0.5:\as!\langle ``\textit{The Art of War}''\rangle;\as?(\Quote:\Nat);0.2:\ab!\langle \Quote/3\rangle.P_1$ }

\noindent with probability $0.5 \times 0.2 = 0.1$, or to execute

\centerline{$0.2:\as!\langle \textit{0195014766}\rangle;as?(\Quote:\Nat);0.2:\ab!\langle \Quote/3\rangle.P_1$}

\noindent with probability $0.2 \times 0.2 = 0.04$. Thus we get: 
\begin{itemize}
\item Answer: $0.5 \times 0.2+0.2 \times 0.2 = 0.14$.
\end{itemize}
\item What is the most probable choice made by \Alice ?
\begin{itemize}
\item Answer: ''The Art of War`` with $\Quote/2$ and $\Quote/4$, with probability $0.7 \times 0.4=0.28$.
\end{itemize}
\end{itemize}

\noindent \Alice\ is willing to contribute partially to the \Quote, 
contribution that is probabilistically chosen out of several possibilities, 
depending on the book \Alice\ intends to purchase. In process $P_1$, 
\Alice\ may perform the remaining transactions with \Seller\ and \Bob.

\end{example} 

\section{Global and Local Types}\label{section:multipartysession}

In what follows, the notion of probability already presented in the previous 
section scales up to the global types. 
Since the probabilities are static, the global types just need to check if 
the probabilities to execute certain actions are the desired ones. Usually 
session types lead to a unique description of a distributed system by means 
of processes. If we would simply incorporate probabilities in the session 
types as done for processes, this would be too restrictive as the slightest 
perturbation of the probabilities in the processes can make the system 
failing the prescribed behaviour. This is why in what follows we use 
probabilistic intervals in session types, allowing for several processes to 
be considered behavioural equivalent by having the same type.

\subsection{Global Types}\label{subsection:globaltypes}
The global types $G,G',\ldots$ presented in Table \ref{table:global} 
describe the global behaviour of a probabilistic multiparty session process. 
In what follows we use probabilistic intervals $\delta$ having one of 
the following forms $(c,d)$, $[c,d]$, $(c,d]$ or $[c,d]$, where $c,d \in 
[0,1]$ and $c\leq d$. For simplicity, we write $\delta = \lfloor c,d \rfloor$ with 
$\lfloor \in \{[,(\}$ and $\rfloor \in \{],)\}$. In what follows, we use 
also the addition of intervals defined as: if $\delta_1 = \lfloor c_1,d_1 
\rfloor$ and $\delta_2 = \lfloor c_2,d_2 \rfloor$ then $\delta_1+\delta_2 = 
\lfloor min(c_1+c_2,1),min(d_1+d_2,1) \rfloor$. 
If $\delta=[c,c]$, we use the shorthand notation $\delta=c$.

\begin{table}[ht]
\vspace{-2ex}  
\centering
\begin{tabular}{l@{\hspace{3ex}}lll@{\hspace{2ex}}r}
   \hline
   \vspace{1ex}
    {\it Global} & $G$ & $::=$ & $\displaystyle\sum_{i\in I}q \rightarrow_{\delta_i} q':k\langle S_i \rangle.G_i$   & (probValues)\\
    & &  $\shortmid$ & $q \rightarrow_{1} q':k\langle T@p \rangle.G'$   & (delegation)\\
   & & $\shortmid$ & $\displaystyle\sum_{i\in I}q \rightarrow_{\delta_i} q':k\{l_{i}:G_{i}\}$   & (probBranching)\\
    &  & $\shortmid$ & $G,G'$  & (parallel)\\
    & & $\shortmid$ & $\mu t.G$ & (recursive)\\
    & & $\shortmid$ & $t$ & (variable)\\
    & & $\shortmid$ & \End & (end)\\
    {\it Sorts} & $S$ & $::=$ & $\textit{bool} \mid \textit{nat} \mid \ldots$ & (value types)\\
    \hline
   \end{tabular}
\vspace{-1ex}\caption{Syntax of Global Types}\label{table:global}
\vspace{-2ex}\end{table}

Type $\displaystyle\sum_{i\in I}q \rightarrow_{\delta_i} q':k\langle S_i 
\rangle.G_i$ states that a participant $q$ sends with a probability in the 
interval $\delta_i$ a message of type $S_i$ to a participant $q'$ through 
the channel~$k$, and then the interactions described by $G_i$ take place. 
We assume that in each communication $q \rightarrow q'$ we have $q \neq q'$, 
i.e. we prohibit reflexive interactions. Type $q \rightarrow_{1} q':k\langle 
T@p \rangle.G'$ denotes the delegation of a session channel of type $T$ 
(called local type) with role $p$ (written as $T@p$). The local types are 
discussed in detail later.

Type $\displaystyle\sum_{i\in I}q \rightarrow_{\delta_i} 
q':k\{l_{i}:G_{i}\}$ says that participant $q$ sends with a probability in 
the interval~$\delta_i$ one label on channel $k$ to another participant 
$q'$. If $l_{i}$ is sent, evolution described by type $G_{i}$ takes 
place. Type $G,G'$ represents concurrent runs of processes specified by 
$G$ and~$G'$. Type $\mu t.G$ is a recursive type, where type variable 
$t$ is guarded in the standard way (they only appear under some prefix).  
Similar to the approach presented in \cite{Bocchi14}, we overload the 
notation $\mu$ as it is easy to see from the context if it precedes a 
process or a type. Type \End\ represents the termination of a process; 
we identify both $G$,\End\ and \End,$G$ with $G$.

In a probabilistic choice, identically behaved branches can be replaced 
by a single branch with a behaviour having the sum of the probabilities 
of the individual branches.

\begin{remark}\label{remark:simplify}
If all the possible interactions communicate the same types (all $S_i$ 
are identical), all select the same branch (all $l_i$ are identical), and the 
continuations after communications respect the same global type (all 
$G_i$ are identical), then the global systems 
can be simplified by using the following rules: 
\begin{itemize}

\item $\displaystyle\sum_{i\in I}q \rightarrow_{\delta_i} q':k\langle S_i 
\rangle.G_i$ is the same as $q \rightarrow_{\sum_{i\in I}\delta_i} q':k\langle S \rangle.G$
whenever $S=S_i$ and $G=G_i$ for all $i$;

\item $\displaystyle\sum_{i\in I}q \rightarrow_{\delta_i} q':k\{l_{i}:G_{i}\}$ 
is the same as $q \rightarrow_{\sum_{i\in I}\delta_i} q':k\{l_{i}:G_i\}$
whenever all $l_i$ are equal. 
\end{itemize}
\end{remark}

\noindent This means that if $\sum_{i\in I}\delta_i = 1$, then global types 
may contain only probabilities equal to $1$, namely a form similar to the 
global types in multiparty session types from~\cite{Honda16}. Therefore, 
for the processes of this particular type, all the results presented in 
\cite{Honda16} hold. 

\begin{example}[cont.]\label{example:global1}
Using the previous remark, the following is a global type of the 
two-buyers-seller protocol of Example \ref{example1}:

\hspace{10ex}$\Alice\rightarrow_{\lfloor 0.7,0.9 \rfloor} \Seller : \as\langle \String\rangle. G1$ 
$+ \Alice\rightarrow_{\lfloor 0.15,0.25 \rfloor} \Seller :\as\langle \Nat\rangle. G1$, \ where

\noindent
$G_1 = \Seller \rightarrow_1 \Alice : \as\langle \Int \rangle. \Seller \rightarrow_1 \Bob: 
\bs\langle \Int \rangle.$ 

\hspace{3ex}$\Alice \rightarrow_1 \Bob : \ab \langle \Int \rangle$.
$\Bob\rightarrow_{\lfloor 0.18,0.22 \rfloor} \Seller: \bs\{\ok_1 : \Bob\rightarrow_1 \Seller: \bs \langle \String \rangle.$

\hspace{31ex}$\Seller \rightarrow_1 \Bob: \bs \langle \Date\rangle.\End\}$

\hspace{23ex}$+\Bob\rightarrow_{\lfloor 0.27,0.31 \rfloor} \Seller: \bs\{\ok_2 : \Bob\rightarrow_1 \Seller: \bs \langle \String \rangle.$

\hspace{31ex}$\Seller \rightarrow_1 \Bob: \bs \langle \Date\rangle.\End\}$

\hspace{23ex}$+\Bob\rightarrow_{\lfloor 0.45,0.52 \rfloor} \Seller: \bs\{\quit.\End\}.$

This global type for \Alice\ is due to the fact that even if she has 
different book titles that she wants to buy, the global type only records 
the type of the sent value (namely a \String). Also, the fact that she 
behaves in a similar manner after sending the \Title, the global type can 
be reduced to a simpler form (according to the above remark).
\end{example}

\begin{example}\label{example:global2}
Let us consider now that \Alice\ decided that, instead of the books 
``\textit{War and Peace}'' and ``\textit{The Art of War}'', she wants the 
books ``\textit{Peter~Pan}'' and ``\textit{Robinson~Crusoe}'', and she is 
willing to pay different amount from the quote. More exactly,

$\Alice \stackrel{def}{=} \overline{a}[3] (\ab,\as,\bs).$

\hspace{6ex}$0.15:as!\langle ``\textit{Peter~Pan}''\rangle;\as?(\Quote:\Nat);$
$1:\ab!\langle \Quote/3\rangle .P_1$

\hspace{3ex}$+$ $0.65:\as!\langle ``\textit{Robinson~Crusoe}''\rangle;\as?(\Quote:\Nat);$
$0.35:\ab!\langle \Quote/3\rangle.P_1+0.65:\ab!\langle \Quote/4\rangle.P_1$

\hspace{3ex}$+$ $0.2:\as!\langle \textit{1593080115} \rangle;\as?(\Quote:\Nat);$
$0.45:\ab!\langle \Quote/2\rangle.P_1+0.55:\ab!\langle  \Quote/4\rangle.P_1$ .

\noindent It is worth mentioning that the two-buyers-seller protocol in 
which \Alice\ is described by this definition is well-typed using the 
same global type (the one from Example \ref{example:global1}) as the 
initial protocol of Example \ref{example1}. Therefore, several different 
processes may have the same global type. 
\end{example}

\subsection{Local Types}
\label{subsection:localtypes}
Local types $T,T',\ldots$ presented in Table \ref{table:local} describe 
the local behaviour of processes, acting as a link between global types 
and processes. 

\begin{table}[h]
\vspace{-1ex}  \centering
 \begin{tabular}{l@{\hspace{3ex}}lll@{\hspace{2ex}}r}
   \hline
   \vspace{1ex}
    {\it Local} & $T$ & $::=$ & $\displaystyle\sum_{i\in I}\delta_i:k!\langle S_i \rangle.T_i$   & (send) \\
    &  & $\shortmid$ & $\displaystyle\sum_{i\in I}k?(S_i).T_i $   & (receive) \\
 &  & $\shortmid$ & $k!\langle T@q \rangle.T'$   & (sessionDelegation) \\
    &  & $\shortmid$ & $k?(T@q).T' $   & (sessionReceive) \\
    &  & $\shortmid$ & $k\oplus \{\delta_i:(l_{i}:T_{i})\}_{i\in I}$  & (selection) \\
    &  & $\shortmid$ & $k\& \{l_{i}:T_{i}\}_{i\in I}$  & (branching)\\
    & & $\shortmid$ & $\mu t.T$ & (recursive)\\
    & & $\shortmid$ & $t$ & (variable)\\
    & & $\shortmid$ & \End & (end)\\     
    {\it Sorts} & $S$ & $::=$ & $bool \mid nat \mid \ldots$ & (value types)\\
    \hline 
   \end{tabular}
\vspace{-1ex}\caption{Syntax of Local Types}\label{table:local}
\vspace{-2ex}\end{table}

Type $\displaystyle\sum_{i\in I}\delta_i:k!\langle S_i \rangle.T_i$ 
represents the behaviour of sending with probability in the 
interval~$\delta_i$ a value of type $S_i$, and then behaving as described by 
type~$T_i$. Similarly, $\displaystyle\sum_{i\in I}k?(S_i).T_i$ is for 
nondeterministic receiving, and then continuing as described by local type $T_i$. 
The type $k!\langle T@q \rangle.T'$ represents the behaviour of delegating a 
session of type $T@q$, while $k?(T@q).T' $ describes the behaviour of 
receiving a session of type $T@q$.
Type $k\oplus \{\delta_i:(l_{i}:T_{i})\}_{i\in I}$ describes a branching: it 
waits for~$|I|$ options, and behaves as type $T_{i}$ if the $i$-th label is 
selected with probability in the interval $\delta_i$. Type $k\& 
\{l_{i}:T_{i}\}_{i\in I}$ represents the behaviour which 
nondeterministically selects one of the tags (say $l_{i}$), and then behaves 
as $T_{i}$. The rest is the same as for the global types, demanding type 
variables to occur guarded by a prefix. For 
simplicity, as done in~\cite{Honda16}, the local types do not contain the 
parallel composition.

\begin{example}
The following is a local type for the process \Alice\ presented in 
Example \ref{example2}:

\centerline{$\lfloor 0.7,0.9 \rfloor: \as!\langle \String \rangle.\as?(\Int).1:\ab!\langle \Int \rangle.T_1$ $+$ $\lfloor 0.15,0.25 \rfloor: \as!\langle \Nat \rangle.\as?(\Int).1:\ab!\langle \Int \rangle.T_1$ ,}

\noindent 
where $T_1$ is the local type of process $P_1$ from the definition of \Alice.
\end{example}
We define the projection of a global type to a local type for each~participant.

\begin{definition}
The projection for a participant $q$ appearing in a global type~$G$, written $G \upharpoonright q$, is inductively given as:

\begin{itemize}
\item[$\bullet$] $(q_1 \rightarrow_{1} q_2:k\langle T@p \rangle.G')\upharpoonright q =\begin{cases} 
k!\langle T@p \rangle.(G'\upharpoonright q) &\mbox{if } q=q_1 \neq q_2 \\
k?(T@p).(G'\upharpoonright q) & \mbox{if } q=q_2\neq q_1 \\
G'\upharpoonright q & \mbox{if } q\neq q_1  \mbox{ and } q \neq q_2
\end{cases} $;

\item[$\bullet$] $(\displaystyle\sum_{i\in I}q_1 \rightarrow_{\delta_i} q_2:k\langle S_i \rangle.G_i)\upharpoonright q =\begin{cases} 
\displaystyle\sum_{i\in I}\delta_i:k!\langle S_i \rangle.(G_i\upharpoonright q) &\mbox{if } q=q_1 \neq q_2 \\
\displaystyle\sum_{i\in I}k?(S_i).(G_i\upharpoonright q) & \mbox{if } q=q_2\neq q_1 \\
G_1\upharpoonright q & \mbox{if } q\neq q_1 \mbox{ and } q \neq q_2\\
& \forall i, j \in J,\  G_{i}\upharpoonright q=G_{j}\upharpoonright q 
\end{cases} $;

\item[$\bullet$] $(\displaystyle\sum_{i\in I}q_1 \rightarrow_{\delta_i} q_2:k\{l_{i}:G_{i}\})\upharpoonright q 
= \begin{cases} 
k\oplus \{\delta_i:(l_{i}:G_{i}\upharpoonright q)\}_{i\in I} &\mbox{if } q=q_1 \neq q_2\\
k\& \{l_{i}:G_{i}\upharpoonright q\}_{i\in I} & \mbox{if } q=q_2 \neq q_1 \\
G_{1}\upharpoonright q & \mbox{if } q\neq q_1  \mbox{ and } q \neq q_2\\
& \forall i, j \in J,\  G_{i}\upharpoonright q=G_{j}\upharpoonright q 
\end{cases} $;

\item[$\bullet$] $(G_1,G_2) \upharpoonright q  
= \begin{cases}
G_i \upharpoonright q & \mbox{if } q \in G_i \mbox{ and } q \notin G_j, i\neq j \in \{1,2\}\\
\End & \mbox{if } q \notin G_1 \mbox{ and } q \notin G_2
\end{cases}$;

\item[$\bullet$] $(\mu t.G) \upharpoonright q = 
\begin{cases}
\mu t. (G \upharpoonright q) & \mbox{if } G \upharpoonright q \neq \End \mbox{ or } G \upharpoonright q \neq t\\
\End & \mbox{otherwise}
\end{cases} $;

\item[$\bullet$] $t \upharpoonright q = t$ $\quad \bullet$ $\End \upharpoonright q =\End$.
\end{itemize}

When none of the side conditions hold, the projection is undefined.
\end{definition}

\begin{remark}
Regarding the check of linear usage of channels, the verification is 
similar to the one performed in \cite{Honda16}, noting that the 
probabilistic and nondeterministic choices are treated similar to the 
branching in \cite{Honda16}. However, due to the use of synchronous 
communications, the sequence of interactions follows more strictly the one 
of the global behaviour description, resulting in a simpler linear property 
than in \cite{Honda16}.
It should be said that in the branching clause, the projections of those 
participants different from $q_1$ and $q_2$ should generate an identical 
local type (otherwise undefined). 
\end{remark}
Hereafter we assume that global types are well-formed, i.e. $G 
\upharpoonright q$ is defined for all~$q$ occurring in~$G$. 

\section{Probabilistic Multiparty Session Types}
\label{section:typingsystem}

We introduce a typing system with the purpose of typing efficiently the 
probabilistic behaviours of our processes. This typing system uses a map 
from shared names to either their sorts $(S,S', \ldots)$, or to a special 
sort $\langle G \rangle$ used to type sessions. Since a type is inferred 
for each participant, we use notation $T@q$ (called located type) to 
represent a local type $T$ assigned to a participant~$q$. 
Using these, we~define

\centerline{$\Gamma$ $::=$ $\emptyset \,\mid\, \Gamma, x: S \,\mid\, \Gamma, 
a:\langle G \rangle \,\mid\, \Gamma, X:\Delta$ \qquad\qquad 
$\Delta$ $::=$ $\emptyset \,\mid\, \Delta,\tilde{s}:\{T@q\}_{q\in I}$ .}
\smallskip

\noindent
A sorting $(\Gamma,\Gamma', \ldots)$ is a finite map from names to sorts, 
and from process variables to sequences of sorts and types. Typing 
$(\Delta,\Delta',\ldots)$ records linear usage of session channels by 
assigning a family of located types to a vector of session channels. 
$\pid(G)$ stands for the set of participants occurring in~$G$, while 
$\sid(G)$ stands for the number of session channels in~$G$. We~write $\tilde{s}:T@q$ for a singleton typing $\tilde{s}:\{T@q\}$. 
Given two typings $\Delta$ and~$\Delta'$, their disjoint union is denoted 
by $\Delta,\Delta'$ (by assuming that their domains contain disjoint sets 
of session channels).

\begin{table*}[h!]
\vspace{-2ex}  \centering
   \begin{tabular}{@{\hspace{0ex}}c@{\hspace{-2ex}}r@{\hspace{0ex}}}
   \hline
\vspace{-2.5ex}
$~$\\
   \vspace{1.5ex}
   $\Gamma,x:S\vdash x:S$   \qquad
   $\Gamma\vdash \true,\false:\bool$ 
   & ({\sc TName}), ({\sc TBool})\\
   \vspace{1ex}
   $\dfrac{\displaystyle \Delta~\End~\only}{\displaystyle \Gamma\vdash {\bf 0}\rhd\Delta}$ \qquad
   $\dfrac{\displaystyle \Gamma \vdash e_i:\bool}{\displaystyle \Gamma \vdash e_1 \mbox{ or } e_2:\bool}$ & ({\sc TEnd}), ({\sc TOr})\\
   \vspace{1ex}
   $\dfrac{\displaystyle \begin{array}{c}
   \Gamma \vdash a:\langle G\rangle \quad \Gamma\vdash P \rhd \Delta,\tilde{s}:(G \upharpoonright 1)@1
   \quad \{1, \ldots, {\rm n}\}=\pid(G) \quad |\tilde{s}|=\sid(G)
   \end{array} }{\displaystyle \Gamma \vdash \overline{a}[{\rm n}](\tilde{s}).P\rhd \Delta}$ & ({\sc TMCast})\\   
   \vspace{1ex}
   $\dfrac{\displaystyle \begin{array}{c}
   \Gamma \vdash a:\langle G\rangle \quad \Gamma\vdash P \rhd \Delta,\tilde{s}:(G \upharpoonright q)@q \quad q\in \pid(G) \quad q \neq 1 \quad |\tilde{s}|=\sid(G)\end{array}}{\displaystyle \Gamma \vdash a[q](\tilde{s}).P\rhd \Delta}$ & ({\sc TMAccept})\\    
   \vspace{1ex}
   $\dfrac{{\displaystyle \forall i.\Gamma \vdash \tilde{e_i}:\tilde{S_i} \quad \forall i.\Gamma\vdash P_i \rhd \Delta,\tilde{s}:T_i@q} \quad \sum_{i\in I} p_i=1 \quad p_i \in \delta_i}{\displaystyle \Gamma \vdash \sum_{i\in I}p_i:s[k]!\langle \tilde{e_i} \rangle; P_i\rhd \Delta,\tilde{s}:\sum_{i\in I}\delta_i:k!\langle \tilde{S_i} \rangle;T_i@q}$ & ({\sc TSend})\\
   \vspace{1ex}   
   $\dfrac{\displaystyle \forall i.\Gamma, \tilde{x_i}:\tilde{S_i} \vdash P_i \rhd  \Delta,\tilde{s}:T_i@q}{\displaystyle \Gamma \vdash \sum_{i\in I}s[k]?(\tilde{x_i}:\tilde{S_i}); P_i\rhd \Delta,\tilde{s}:\sum_{i \in I}k?(\tilde{S_i});T_i@q}$ & ({\sc TReceive})\\    
   \vspace{1ex}
   $\dfrac{\Gamma \vdash P \rhd \Delta,\tilde{s}:T@q}{\Gamma \vdash s[k]!\langle\langle \tilde{t} \rangle\rangle; P\rhd \Delta,\tilde{s}:k!\langle \tilde{T'@q'} \rangle;T@q,\tilde{t}:T'@q'}$ & ({\sc TSDeleg})\\
   \vspace{1ex}   
   $\dfrac{\Gamma \vdash P \rhd \Delta,\tilde{s}:T@q,\tilde{t}:T'@q'}{\Gamma \vdash s[k]?((\tilde{t})); P\rhd \Delta,\tilde{s}:k?(T'@q');T@q}$ & ({\sc TSReceive})\\     
   \vspace{1ex}
   $\dfrac{{\displaystyle \forall i.\Gamma\vdash P_{i} \rhd \Delta,\tilde{s}:T_{i}@q} \quad \sum_{i\in I} p_i=1 \quad p_i \in \delta_i}{\displaystyle \Gamma \vdash \sum_{i\in I}p_i:s[k] \lhd l_{i}; P_{i}\rhd \Delta,\tilde{s}:k\oplus \{\delta_i:(l_{i}:T_{i})\}_{i\in I}@q}$ & ({\sc TSelect})\\
     \vspace{1ex}
   $\dfrac{\displaystyle \forall j.\Gamma\vdash P_{j} \rhd \Delta,\tilde{s}:T_{j}@q }{\displaystyle \Gamma \vdash s[k] \rhd \{l_{j}; P_{j}\}_{j\in J}\rhd \Delta,\tilde{s}:k\& \{l_{j}:T_{j}\}_{j\in J}@q}$ & ({\sc TBranch})\\   
    \vspace{1ex}
   $\dfrac{\displaystyle \Gamma\vdash P \rhd \Delta \qquad \Gamma\vdash Q \rhd \Delta'}{\displaystyle \Gamma\vdash P \mid Q \rhd \Delta, \Delta'}$ \qquad $\dfrac{\displaystyle \Gamma \vdash e \rhd bool \qquad \Gamma\vdash P \rhd \Delta \qquad \Gamma\vdash Q \rhd \Delta}{\displaystyle \Gamma\vdash \If~e~\Then~P~\Else~Q\rhd\Delta} $& ({\sc TConc}), ({\sc TIf})\\    
     \vspace{1ex}
    $\dfrac{\displaystyle \Gamma, a:\langle G \rangle \vdash P \rhd \Delta}{\displaystyle \Gamma\vdash (\nu a) P \rhd\Delta}$ \qquad $\dfrac{\displaystyle \Gamma \vdash P \rhd \Delta,\tilde{s}:\{T_i@i\}_{i \in I}}{\displaystyle \Gamma\vdash (\nu \tilde{s}) P \rhd\Delta}$
& ({\sc TNRes}), ({\sc TCRes})\\ 
   \vspace{1ex}
   $\dfrac{\Delta'~\End~\only}{\displaystyle \Gamma, X:\Delta\vdash X \rhd \Delta, \Delta'}$ \qquad $\dfrac{\displaystyle \Gamma,X:\Delta \vdash P\rhd \Delta}{\displaystyle \Gamma\vdash \mu X.P\rhd\Delta}$& ({\sc TVar}), ({\sc TRec})\\ 
      
    \hline 
   \end{tabular}
\vspace{-1ex}\caption{Typing System}\label{table:typing}
\vspace{-3ex}\end{table*}

The type assignment system for processes is given in Table 
\ref{table:typing}. We use the judgement $\Gamma \vdash P \rhd \Delta$  
saying that ``under the environment $\Gamma$, process $P$ has typing 
$\Delta$''. The rules ({\sc TName}),({\sc TBool}) and ({\sc TOr}) are for 
typing names and expressions. The rules
({\sc TMcast}) and ({\sc TMacc}) are for typing the session 
request and session accept, respectively. The type for $\tilde{s}$ is the 
projection on participant $q$ of the declared global type~$G$ for~$a$ 
in~$\Gamma$. It could be noticed that in rule ({\sc TMcast}) the projection 
is made on the participant requesting the session, while in ({\sc TMacc}) 
the projection is made for each of the $({\rm n}-1)$ accepting participants. 
The local type $(G\upharpoonright q)@q$ means that the participant~$q$ has $G 
\upharpoonright q$ (namely the projection of $G$ onto $q$) as its local 
type. The condition $|\tilde{s}| = \sid(G)$ ensures that the number of session 
channels meets those in $G$.

The rules ({\sc TSend}) and ({\sc TReceive}) are for sending and receiving 
values, respectively. As these rules require probabilistic and 
nondeterministic choices, the rules should check all the possible choices 
with respect to $\Gamma$. Since one of the channels appearing in 
$\tilde{s}$ (say $k$) is used for communication, we record~$k$ by using the 
name $s[k]$ as part of the typed process. In both rules, $q$ in 
$\tilde{s}:T_i@q$ ensures that each $P_i$ represents (being inferred as) 
the behaviour for participant $q$, and its domain should be $\tilde{s}$. 
Then the relevant type prefixes $\displaystyle\sum_{i\in I}\delta_i:k!\langle 
\tilde{S_i} \rangle;T_i@q$ for the output and $\displaystyle\sum_{i \in 
I}k?(\tilde{S_i});T_i@q$ for the input are composed in the session 
environment (as conclusion). The rules ({\sc TSDeleg}) and ({\sc TSReceive}) 
are for delegation of a session and its dual. They are similar to the 
rules ({\sc TSend}) and ({\sc TReceive}), except that here a vector of 
session channels is communicated instead of values. The carried type~$T'$ 
is located, making sure that the receiver takes the role of a specific 
participant (here~$q'$) in the delegated multiparty session. It should be 
noticed that in rule ({\sc TSDeleg}) the type of $\tilde{t}:T'@q'$ does not 
appear in the type of $P$, while it appears in rule ({\sc TSReceive}) 
meaning that it uses the channels of $P$. The rules ({\sc TSelect}) and 
({\sc TBranch}) are for typing selection and branching, respectively. 
Similar to ({\sc TSend}) and ({\sc TReceive}), these rules employ 
probabilistic and nondeterministic choices, respectively. This means that 
the rules should check all the possible choices with respect to $\Gamma$.

The rule ({\sc TConc}) composes two processes if their local types are 
disjoint. The rules ({\sc TIf}), ({\sc TEnd}), ({\sc TRec}) and ({\sc 
TVar}) are standard. The rules ({\sc TNRes}) and ({\sc TCRes}) represent 
the restriction rules for shared names and channel names, respectively. In 
({\sc TEnd}), ``$\Delta \ \End\ \only$'' means that $\Delta$ contains only 
{\sf end} types. 
\smallskip

As processes interact, their dynamics is formalized as in \cite{Honda16} by 
a reduction relation $\Rightarrow$ on typing~$\Delta$:
\begin{itemize}
\item $\tilde{s}:\{\displaystyle\sum_{i\in I}\delta_i:k!\langle \tilde{S_i} \rangle;T_i@q_1, \sum_{j \in J}k?(\tilde{S_{j}});T_{j}@q_2\}$ $\Rightarrow_{\delta_{k_1}}$

\hspace{10ex}$\tilde{s}:\{T_{k_1}@q_1,T_{k_2}@q_2,\ldots\}$, for $k_1\in I$, $k_2\in J$ and $S_{k_1}=S_{k_2}$;

\item $\tilde{s}:\{k!\langle T'@q' \rangle;T@q,, k?(T'@q');T''@q''\} \Rightarrow_1 \tilde{s}:\{T@q, T''@q''\}$

\item $\tilde{s}:\{k\oplus \{\delta_i:(l_{i}:T_{i})\}_{i\in I}@q_1,k\& \{l_{j}:T_{j}\}_{j\in J}@q_2,\ldots\}$ $\Rightarrow_{\delta_{k_1}}$

\hspace{10ex}$\tilde{s}:\{T_{k_1}@q_1,T_{k_2}@q_2,\ldots\}$, for $k_1\in I$, $k_2\in J$ and $S_{k_1}=S_{k_2}$;

\item $\Delta,\Delta' \Rightarrow_{p} \Delta,\Delta''$ if $\Delta' \Rightarrow_{p} \Delta''$.
\end{itemize}

\noindent The first rule corresponds to sending/receiving a value of type 
$\tilde{S_j}$ by the participant $q$, while the second rule corresponds to 
session delegation. The third rule illustrates the choice and reception of a 
label $l_j$ by the participant $q$.
The last rule is used to compose typings when only a part of a typing changes.
\smallskip

We present two basic properties of our type system: substitution and 
weakening. The substitution plays a central role in proving type 
preservation, while weakening allows introducing new entries in a typing.

\begin{lemma} \label{lemma:substweak} \ { } 
\begin{itemize}
\item[(1)] (substitution) $\Gamma,\tilde{x} : S \vdash P \rhd \Delta$ and $\Gamma \vdash \tilde{v} : S$ imply $\Gamma \vdash P\{\tilde{v}/\tilde{x}\}  \rhd \Delta$. 

\item[(2)] (type weakening) 
Whenever $\Gamma \vdash P \rhd \Delta$ is derivable, 
then its weakening is also derivable, \\ 
namely $\Gamma \vdash P \rhd \Delta,\Delta'$ for disjoint $\Delta'$, where 
$\Delta'$ contains only {end}.
\end{itemize} 
\end{lemma}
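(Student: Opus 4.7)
The plan is to prove both parts by structural induction on the derivation of the typing judgement $\Gamma \vdash P \rhd \Delta$, using the rules of Table~\ref{table:typing}. The rules are syntax-directed with premises strictly smaller than their conclusions, so the induction is well-founded in both cases.

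For part~(1), the nontrivial cases are those that touch value variables. In (TName), if $x$ is one of the substituted variables then the hypothesis $\Gamma \vdash \tilde{v}:S$ delivers the conclusion directly; otherwise the substitution is a no-op. In expression-consuming rules such as (TSend), (TSelect), (TIf) and (TOr), the probabilities $p_i$, intervals $\delta_i$, the normalisation $\sum_{i\in I}p_i = 1$ and the sort decorations are all invariant under substitution of values for value-variables, so applying the induction hypothesis to each premise $P_i$ lets me rebuild the derivation with $P_i\{\tilde{v}/\tilde{x}\}$ in place of $P_i$. In the binder cases (TReceive), (TSReceive), (TMCast), (TMAccept), (TCRes), (TNRes) and (TRec), I invoke $\alpha$-conversion so that the bound identifiers are disjoint from $\tilde{x}$ and from $\fn(\tilde{v}) \cup \fv(\tilde{v})$; this ensures substitution commutes with the binder and the induction step goes through verbatim. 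All remaining rules just propagate the hypothesis to their premises and reassemble the derivation.

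For part~(2), the base case (TEnd) is immediate, since if $\Delta$ is already end-only then so is $\Delta, \Delta'$ for any disjoint end-only $\Delta'$. The case (TVar) is similar: the conclusion already has an end-only tail that simply absorbs $\Delta'$. For each inductive rule I apply the hypothesis to one premise after extending its typing by $\Delta'$: to the leftmost component in (TConc), and to the unique continuation in (TSend), (TSelect), (TSDeleg), (TSReceive) and their branching duals. The disjointness assumption guarantees that $\Delta'$ does not clash with any session channel bound in the premises' typings, and applying the same rule at the conclusion yields $\Delta, \Delta'$ as required. Rules that introduce session channels, namely (TMCast), (TMAccept) and (TCRes), thread $\Delta'$ through the hypothesis unchanged, since $\Delta'$ is assumed disjoint from the fresh $\tilde{s}$.

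The main obstacle will be careful $\alpha$-renaming in the binder cases of~(1) and the bookkeeping of where $\Delta'$ is threaded when a rule has multiple premises in~(2). Neither issue is deep, but both require explicit verification that the side conditions of each rule---notably $|\tilde{s}| = \sid(G)$, $\sum_{i \in I} p_i = 1$, and ``$\Delta\ \End\ \only$''---remain valid after the transformation. I would handle these case by case rather than bundling them into a further auxiliary statement.
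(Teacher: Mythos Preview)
Your proposal is correct and is precisely the standard structural-induction argument that the paper alludes to; the paper itself does not spell out a proof at all, merely stating that it ``is rather standard, similar to that presented in~\cite{Honda16}''. Your plan is therefore more detailed than what the paper provides, but it is the same approach in substance (the minor imprecision of ``unique continuation'' for rules like ({\sc TSend}) and ({\sc TSelect}), which quantify over all $i\in I$, and the omission of ({\sc TIf}) in part~(2), are easily fixed by applying the induction hypothesis to every premise rather than one).
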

\begin{proof}
The proof is rather standard, similar to that presented in \cite{Honda16}.
\end{proof}

We now prove that our probabilistic typing system is sound, namely its 
type-checking rules prove only terms that are valid with respect to both 
structural congruence and operational semantics. In what follows, by 
inverting a rule we describe how the (sub)processes of a well-typed 
process can be typed. This is a basic property that is used in some papers 
when reasoning by induction on the structure of processes (see 
\cite{Bocchi14} and \cite{Honda16}, for instance).

\begin{theorem}[type preservation under equivalence] \label{theorem:struct} 
$\Gamma \vdash P\rhd \Delta$ and $P \equiv P'$ \ imply \ $\Gamma \vdash P'\rhd \Delta$ .
\end{theorem}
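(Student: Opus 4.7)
The plan is to proceed by induction on the derivation of $P \equiv P'$, treating each equational axiom and the congruence/transitivity/symmetry closure cases separately. Symmetry is automatic once I verify each axiom in both directions; transitivity follows from chaining two induction hypotheses; reflexivity is trivial. The congruence closure (equivalence under arbitrary contexts) is handled by induction on the context, invoking the corresponding typing rule from Table~\ref{table:typing} at each constructor. So the real work is the base cases.

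For the parallel axioms, I would invert ({\sc TConc}): if $\Gamma \vdash P \mid Q \rhd \Delta$ then $\Delta = \Delta_1, \Delta_2$ with $\Gamma \vdash P \rhd \Delta_1$ and $\Gamma \vdash Q \rhd \Delta_2$, where the comma denotes disjoint union. Commutativity and associativity of $\mid$ are immediate from commutativity and associativity of this disjoint union, followed by a reapplication of ({\sc TConc}). For $P \mid \mathbf{0} \equiv P$, one direction needs ({\sc TEnd}) to produce an empty-end typing $\Delta'$ and the other uses Lemma~\ref{lemma:substweak}(2) (type weakening) to absorb it. For the scope extrusion $(\nu n) P \mid Q \equiv (\nu n)(P \mid Q)$ with $n \notin \fn(Q)$, I invert either ({\sc TNRes}) or ({\sc TCRes}) (depending on whether $n$ is a shared or session name) to extract the typing of $P$, recompose with $Q$ via ({\sc TConc}), and reapply the restriction; the side condition $n \notin \fn(Q)$ is exactly what guarantees the typing context for $Q$ is unaffected by adding or removing the $n$-binding in $\Gamma$. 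The case $(\nu n)(\nu n')P \equiv (\nu n')(\nu n)P$ is a straightforward double inversion/reapplication, and $(\nu n)\mathbf{0} \equiv \mathbf{0}$ and $\mu X.\mathbf{0} \equiv \mathbf{0}$ are handled by noting that ({\sc TEnd}) types $\mathbf{0}$ under any all-end typing, which both $(\nu n)$ via ({\sc TNRes})/({\sc TCRes}) and $\mu X.$ via ({\sc TRec}) preserve.

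The probabilistic/nondeterministic summation axioms $p_i : P + p_j : Q \equiv p_j : Q + p_i : P$ and $P + Q \equiv Q + P$ are where one has to be attentive but not deep: the rules ({\sc TSend}), ({\sc TReceive}), ({\sc TSelect}) and ({\sc TBranch}) are all indexed over a set $I$, and the induced local type $\sum_{i \in I}\delta_i : k!\langle \tilde{S_i}\rangle; T_i@q$ (and its variants) is built from a set-indexed family. Since the typing judgement quantifies over all $i \in I$ without an ordering, a permutation of the summands corresponds to the same premise set and the same conclusion (up to reordering of a commutative sum in the local type), so typability is preserved.

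The main obstacle I anticipate is the interplay between the restriction/parallel axiom and ensuring the session-channel typing remains consistent, in particular for ({\sc TCRes}) where $\tilde{s}$ is assigned a full family $\{T_i@i\}_{i \in I}$ of located types — when extruding $(\nu \tilde{s})$, I must check that none of these located types can accidentally leak into $Q$, which again is guaranteed by $\tilde{s} \notin \fn(Q)$ and the disjoint-domain convention on typings. A secondary subtlety is that the substitution lemma (Lemma~\ref{lemma:substweak}(1)) is not needed here (it is reserved for the reduction/subject-reduction theorem that presumably follows this one), so the entire argument stays at the level of structural inversion/reapplication plus weakening, keeping each case short.
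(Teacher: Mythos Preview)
Your proposal is correct and follows essentially the same approach as the paper: induction on the derivation of $\equiv$, treating each axiom in both directions via inversion of the relevant typing rule and reapplication, with Lemma~\ref{lemma:substweak}(2) (weakening) used precisely for the $P \mid \mathbf{0} \equiv P$ case. The paper only spells out that one case and dismisses the rest as similar, whereas you sketch all of them, but the strategy and the key lemma invoked are identical.
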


\begin{proof}
The proof is by induction on $\equiv$ , showing (in both ways) that if
one side has a typing, then the other side has the same typing.

\begin{itemize}
\item Case $P \mid {\bf 0} \equiv P$ .

$\Rightarrow$ Assume $\Gamma \vdash P \mid {\bf 0}\rhd \Delta$. By 
inverting the rule {\sc (TConc)}, we obtain $\Gamma \vdash P\rhd 
\Delta_1$ and $\Gamma \vdash {\bf 0} \rhd \Delta_2$, where $\Delta_1, \Delta_2 =\Delta$. By inverting the 
rule {\sc (TEnd)}, $\Delta_2$ is only {end} and $\Delta_2$ is such 
that $\dom(\Delta_1) \cap \dom(\Delta_2)=\emptyset$. Then, by 
weakening, we get that $\Gamma \vdash P\rhd \Delta$, where $\Delta= \Delta_1,\Delta_2$.

$\Leftarrow$ Assume $\Gamma \vdash P\rhd \Delta$. By rule {\sc (TEnd)}, 
it holds that $\Gamma \vdash {\bf 0} \rhd \Delta'$, where $\Delta'$ is 
only {end} and $\dom(\Delta) \cap \dom(\Delta')=\emptyset$. By 
applying rule {\sc (TConc)}, we obtain $\Gamma \vdash P \mid {\bf 0}\rhd 
\Delta,\Delta'$, and for $\Delta'=\emptyset$ we obtain $\Gamma \vdash P 
\mid {\bf 0}\rhd \Delta$, as required.
\end{itemize}
The remaining cases are proved in a similar manner.
\end{proof}

\noindent
According to the following theorem, if a well-typed process takes a reduction 
step of any kind, the resulting process is also well-typed.

\begin{theorem}[type preservation under reduction]\label{theorem:reduction} \ 

\centerline{$\Gamma \vdash P\rhd \Delta$ and $P\rightarrow_{p_i} P'$ \ imply \ $\Gamma \vdash P'\rhd \Delta'$, where $\Delta=\Delta'$ or $\Delta \Rightarrow_{\delta_i} \Delta'$ with $p_i \in \delta_i$.}
\end{theorem}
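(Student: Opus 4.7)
The plan is to proceed by induction on the derivation of $P \rightarrow_{p_i} P'$, performing a case analysis on the last reduction rule applied. For the base cases, I will invert the typing derivation of $P$ (using the fact that each syntactic form admits essentially one typing rule modulo weakening) to extract typings of the constituent subprocesses, then reassemble a typing for $P'$ using the other rules of Table~\ref{table:typing} together with Lemma~\ref{lemma:substweak}. The inductive cases ({\sc Scope}, {\sc Par1}, {\sc Par2}, {\sc Struct}) will follow directly from the induction hypothesis, with {\sc Struct} appealing to Theorem~\ref{theorem:struct} at each end.

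For the communication rules, the argument will hinge on matching the projected global type with the local typings on both sides. In the {\sc Link} case, inverting ({\sc TMCast}) and ($\rm n{-}1$)~applications of ({\sc TMAccept}) yields typings $\tilde{s}:(G\upharpoonright q)@q$ for each participant $q \in \pid(G)$; after combining these with ({\sc TConc}) and closing under ({\sc TCRes}), the result is typed at the same $\Delta$, so $\Delta=\Delta'$ suffices. In the ({\sc Com}) case, inverting ({\sc TSend}) gives $\Gamma \vdash P_i \rhd \Delta,\tilde{s}:T_i@q$ for each $i \in I$ with $p_i \in \delta_i$, and inverting ({\sc TReceive}) gives $\Gamma,\tilde{x_j}:\tilde{S_j} \vdash P_j \rhd \Delta',\tilde{s}:T_j@q'$. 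The uniqueness condition $\tilde{S_k}\neq \tilde{S_t}$ for $k\neq t$ guarantees that the index $j$ is uniquely determined by $\tilde{S_i}$. Applying substitution (Lemma~\ref{lemma:substweak}(1)) to replace $\tilde{x_j}$ by $\tilde{v_i}$, and recomposing with ({\sc TConc}), gives the typing of the reduct, while the session environment reduces by the first clause of $\Rightarrow$ with $p_i \in \delta_i$, as required. The ({\sc Label}) and ({\sc Deleg}) cases proceed analogously, using ({\sc TSelect})/({\sc TBranch}) and ({\sc TSDeleg})/({\sc TSReceive}) respectively; the ({\sc IfT})/({\sc IfF}) cases are immediate from ({\sc TIf}), and ({\sc Call}) reduces to a variant of substitution applied to recursion variables via ({\sc TRec}) and ({\sc TVar}).

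The main obstacle will be in the ({\sc Com}) and ({\sc Label}) cases, where I need to confirm that the \emph{probability} of the transition matches an interval in the typing, namely that the $p_i$ appearing on the arrow $\rightarrow_{p_i}$ lies in $\delta_i$. This is exactly the side condition recorded by ({\sc TSend}) and ({\sc TSelect}), so the property is visible in the typing derivation; the care required is to keep track of it while inverting the rule and to choose the right index in the receiver's branch. A secondary subtlety appears in ({\sc Par2}), where a compound reduction has probability $p\cdot q$: I will need to show that if $\Delta_1 \Rightarrow_{\delta_1} \Delta_1'$ and $\Delta_2 \Rightarrow_{\delta_2} \Delta_2'$ with $p \in \delta_1$ and $q \in \delta_2$, then $p\cdot q$ lies in a matching interval obtained from the context; since the last clause of $\Rightarrow$ only composes typings additively in the environment rather than multiplicatively in probability, the statement of the theorem is best read as allowing the two participants' typings to reduce independently under $\Rightarrow$, and I will spell out this reading explicitly in the proof.

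Finally, the ({\sc Struct}) case is handled entirely by Theorem~\ref{theorem:struct}: if $P \equiv P_0 \rightarrow_{p_i} P_0' \equiv P'$, applying type preservation under equivalence to $P \equiv P_0$ gives $\Gamma \vdash P_0 \rhd \Delta$; the induction hypothesis then yields $\Gamma \vdash P_0' \rhd \Delta'$ with the stated relation between $\Delta$ and $\Delta'$; and a second application to $P_0' \equiv P'$ closes the case. Weakening (Lemma~\ref{lemma:substweak}(2)) will be used silently wherever a subprocess has been typed in a smaller session environment than the surrounding context.
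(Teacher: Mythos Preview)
Your proposal is correct and follows essentially the same approach as the paper: induction on the reduction derivation, inversion of the relevant typing rules (chiefly \textsc{TConc} together with \textsc{TSend}/\textsc{TReceive}, \textsc{TSDeleg}/\textsc{TSReceive}, \textsc{TSelect}/\textsc{TBranch}), an appeal to the substitution part of Lemma~\ref{lemma:substweak}, and recomposition via \textsc{TConc} and the typing reduction $\Rightarrow$. The paper only spells out the \textsc{Com} and \textsc{Deleg} cases and dismisses the rest as similar; your outline is in fact more detailed, and your observation about \textsc{Par2}---that the $\Rightarrow$ relation as stated only composes additively in the environment, so the product probability $p\cdot q$ must be accommodated by reading the conclusion as two independent typing reductions---is a genuine subtlety that the paper does not comment on.
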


\begin{proof}
By induction on the derivation of $P \rightarrow_{p_i} P'$. There is a case 
for each operational semantics rule, and for each operational semantics 
rule we consider each typing system rule generating $\Gamma \vdash P \rhd \Delta$.

\begin{itemize}
\item Case {\sc (Com)}: $\displaystyle\sum_{i\in I}p_i: s!\langle 
\tilde{e_i} \rangle;P_i \mid \sum_{j\in J}s? (\tilde{x_j});P_j 
\rightarrow_{p_i} P_i \mid P_j\{\tilde{v_i}/\tilde{x_j}\}$ .

By assumption, $\Gamma \vdash \displaystyle\sum_{i\in I}p_i: s!\langle 
\tilde{e_i} \rangle;P_i \mid \sum_{j\in J}s? (\tilde{x_j});P_j \rhd 
\Delta$. By inverting the rule {\sc (TConc)}, we get $\Gamma \vdash 
\displaystyle\sum_{i\in I}p_i: s!\langle \tilde{e_i} \rangle;P_i \rhd 
\Delta_1 $, $\Gamma \vdash \displaystyle\sum_{j\in J}s? 
(\tilde{x_j});P_j \rhd \Delta_2$ with $\Delta=\Delta_1, \Delta_2$. 
Since these can be inferred only from {\sc (TSend)} and {\sc (TReceive)}, 
we know that $\Delta_1=\Delta'_1,\tilde{s}:\displaystyle\sum_{i\in 
I}\delta_i:k!\langle \tilde{S_i} \rangle;T_i@q_1$ and 
$\Delta_2=\Delta'_2,\tilde{s}:\displaystyle\sum_{j\in J}k?\langle 
\tilde{S_j} \rangle;T_j@q_2$. By inverting the rules {\sc (TSend)} and 
{\sc (TReceive)}, we get that $\forall i.\Gamma \vdash 
\tilde{e_i}:\tilde{S_i}$, $\sum_{i\in I} p_i=1$, $p_i \in \delta_i$, $\forall i.\Gamma\vdash 
P_i \rhd \Delta'_1,\tilde{s}:T_i@q_1$ and $ \forall j.\Gamma, 
\tilde{x_j}:\tilde{S_j} \vdash P_j \rhd \Delta'_2,\tilde{s}:T_j@q_2$. 
Assuming that $e_i \downarrow v_i$ and knowing that $\forall i.\Gamma 
\vdash \tilde{e_i}:\tilde{S_i}$, it implies that $\forall i.\Gamma \vdash 
\tilde{v_i}:\tilde{S_i}$. From $\Gamma \vdash \tilde{v_i}:\tilde{S_i}$ 
and $\Gamma, \tilde{x_j}:\tilde{S_j} \vdash P_j \rhd 
\Delta'_2,\tilde{s}:T_j@q_2$, by applying the substitution part of 
Lemma~\ref{lemma:substweak}, we get that $\Gamma \vdash P_j\{v_i/x_j\} \rhd 
\Delta'_2,\tilde{s}:T_j@q_2$. By applying the rule {\sc (TConc)}, we get 
$\Gamma \vdash P_i \mid P_j\{v_i/x_j\} \rhd 
\Delta'_1,\tilde{s}:T_i@q_1,\Delta'_2,\tilde{s}:T_j@q_2$. Using the 
reduction on types, we get $\Delta \Rightarrow_{\delta_i} \Delta'$, where 
$\Delta'=\Delta'_1,\tilde{s}:T_i@q_1,\Delta'_2,\tilde{s}:T_j@q_2$ and $p_i \in \delta_i$.

\item Case {\sc (Deleg)}: $s!\langle\langle \tilde{s} \rangle\rangle; P 
\mid s?((\tilde{s}));Q \rightarrow_{1} P \mid Q$ .

By assumption, $\Gamma \vdash s!\langle\langle \tilde{s} \rangle\rangle; P 
\mid s?((\tilde{s}));Q \rhd \Delta$. By inverting the rule {\sc (TConc)}, 
we get that $\Gamma \vdash s!\langle\langle \tilde{s} \rangle\rangle; P 
\rhd \Delta_1 $, $\Gamma \vdash s?((\tilde{s}));Q \rhd \Delta_2$ with 
$\Delta=\Delta_1, \Delta_2$. Since these can be inferred only from {\sc 
(TSDeleg)} and {\sc (TSReceive)}, we know that 
$\Delta_1=\Delta'_1,\tilde{s}:k!\langle \tilde{T'@q'} 
\rangle;T@q,\tilde{t}:T'@q'$ and 
$\Delta_2=\Delta'_2,\tilde{s}:k?(T'@q');T''@q''$. By inverting the rules 
{\sc (TSDeleg)} and {\sc (TSReceive)}, we get that $\Gamma \vdash P \rhd 
\Delta'_1,\tilde{s}:T@q$ and $\Gamma \vdash Q \rhd 
\Delta'_2,\tilde{s}:T''@q'',\tilde{t}:T'@q'$. By applying the rule {\sc 
(TConc)}, we get $\Gamma \vdash P \mid Q \rhd 
\Delta'_1,\tilde{s}:T@q,\Delta'_2,\tilde{s}:T''@q',\tilde{t}:T'@q''$. By 
using the reduction on types, we get that $\Delta \Rightarrow_{1} \Delta'$, 
where $\Delta'=\Delta'_1,\tilde{s}:T@q,\Delta'_2,\tilde{s}:T''@q'',\tilde{t}:T'@q'$.
\end{itemize}
The remaining cases are proved in a similar manner.
\end{proof}

A corollary of the type preservation result is the probabilistic-error 
freedom. An error is reached when a process performs an action that 
violates the constraints prescribed by its type. To formulate this 
property of probabilistic-error freedom, we extend the syntax by 
including a process {\sf error}, while the reduction rules for processes 
are extended as below. This is done to accommodate the fact that the 
processes with value sending and label selection in which the sum of all 
probabilities is different from $1$ generate an {\sf error}.

\begin{table}[ht]
\vspace{-2ex}  \centering
  \begin{tabular}{l@{\hspace{2ex}}c}
   \hline
\vspace{-1ex}
$~$\\
\vspace{1ex} 
$\displaystyle\sum_{i\in I}p_i: s!\langle \tilde{e_i} \rangle;P_i \mid \sum_{j\in J}s? (\tilde{x_j}:\tilde{S_j});P_j  \rightarrow_1$ {\sf error}  \qquad (if $\displaystyle\sum_{i\in I} p_i \neq 1$) & ({\sc ECom}) \\
\vspace{1ex}
$\displaystyle\sum_{i\in I}p_i:s\lhd l_{i};P_{i} \mid s\rhd \{l_{j}:P_{j}\}_{j\in J}  \rightarrow_1$ {\sf error} \quad (if $\displaystyle\sum_{i\in I} p_i \neq 1$)  & ({\sc ELabel}) \\
\hline 
\end{tabular}
\vspace{-2ex}\caption{Extending Operational Semantics with Rules for {\sf error}}
\label{table:semanticserror}
\vspace{-2ex}\end{table}

\begin{theorem}[probabilistic-error freedom] If $\Gamma \vdash P\rhd 
\Delta$ and $P\rightarrow_{p_i} P'$, then $P' \neq {\sf error}$.
\end{theorem}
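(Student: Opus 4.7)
The plan is to proceed by induction on the derivation of the reduction $P \rightarrow_{p_i} P'$, observing that $\mathsf{error}$ can only be produced by the two new rules (ECom) and (ELabel), and that each requires $\sum_{i \in I} p_i \neq 1$. The strategy is therefore to show that whenever the redex on the left-hand side of (ECom) or (ELabel) is well-typed, the probabilities must sum to $1$, contradicting the side condition of the error-generating rule.

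First, I would handle the potentially-error-producing base cases directly. For (ECom), suppose towards a contradiction that $\sum_{i \in I} p_i : s!\langle \tilde{e_i}\rangle; P_i \mid \sum_{j \in J} s?(\tilde{x_j}:\tilde{S_j}); P_j$ is typable under $\Gamma \vdash \cdot \rhd \Delta$. By inverting (TConc) the sender and receiver are typed separately, and inverting (TSend) on the sender yields precisely the premise $\sum_{i \in I} p_i = 1$ (with $p_i \in \delta_i$). This contradicts the side condition of (ECom), so this reduction step is unreachable from a well-typed process. The case of (ELabel) is analogous, inverting (TConc) and then (TSelect), which again forces $\sum_{i \in I} p_i = 1$.

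Second, I would dispatch the remaining reduction rules, where the conclusion $P'$ cannot itself be $\mathsf{error}$ by inspection of their syntactic form: (Link), (Com), (Deleg), (Label), (IfT), (IfF), (Call) all produce ordinary process terms, never the constant $\mathsf{error}$. For the contextual rules (Scope), (Par1), (Par2), I would apply the induction hypothesis to the sub-reduction after inverting (TNRes)/(TCRes) or (TConc) to obtain a typing for the reducing subprocess; since the subprocess evolves to a non-error term by induction, so does the enclosing process (none of these contextual rules inject $\mathsf{error}$). For (Struct), I would appeal to Theorem \ref{theorem:struct} (type preservation under equivalence) to transfer the typing across $\equiv$ on both sides, then invoke the induction hypothesis on the intermediate reduction.

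The proof is essentially a one-line observation at each error rule, so there is no deep obstacle; the only mild care is bookkeeping during inversion, namely that (TSend) and (TSelect) are the unique typing rules capable of assigning a type to a probabilistic sum prefix, and that their premise $\sum_{i \in I} p_i = 1$ is exactly the negation of the side condition of (ECom) and (ELabel). A minor point is to ensure that the contextual closure cases do not silently allow the error-producing rules to fire on subterms whose typability has not been established; this is handled by observing that (Par1) and (Par2) require typability of both components via (TConc), so the reducing component inherits a well-formed typing to which the induction hypothesis applies.
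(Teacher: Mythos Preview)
Your proposal is correct and follows essentially the same approach as the paper: both argue that the error-producing rules (\textsc{ECom}) and (\textsc{ELabel}) cannot fire on a well-typed redex because inversion of (\textsc{TSend})/(\textsc{TSelect}) forces $\sum_{i\in I} p_i = 1$, directly contradicting the side condition of the error rules. The only difference is presentational: the paper does a flat case analysis and dismisses all Table~\ref{table:semantics} rules in one line (``these rules do not introduce \textsf{error} processes''), whereas you organise the same argument as an induction on the reduction derivation and explicitly thread the typing through the contextual rules (\textsc{Scope}), (\textsc{Par1}), (\textsc{Par2}), (\textsc{Struct}) via inversion and Theorem~\ref{theorem:struct}; your version is slightly more careful but not substantively different.
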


\begin{proof}
We assume that $P \neq {\sf error}$, and proceed by case analysis on the 
reduction~$P\rightarrow_{p_i} P'$.
If the last reduction is by one of the rules of Table \ref{table:semantics} 
then $P' \neq {\sf error}$ since these rules do not introduce {\sf error} 
processes. Also, by using Theorem \ref{theorem:reduction}, we are able to 
show that $\Gamma \vdash P' \rhd \Delta'$ for some $\Delta'$ (obtained 
by some reduction from $\Delta$).

The only reductions introducing error processes are provided by the rules 
of Table \ref{table:semanticserror}. We consider only one case (as the 
other is treated in a similar manner). Consider the rule ({\sc ECom}) 
applied to $P$ having the form $\displaystyle\sum_{i\in I}p_i: s!\langle 
\tilde{e_i} \rangle;P_i \mid \sum_{j\in J}s? (\tilde{x_j}:\tilde{S_j});P_j$. Then by 
({\sc ECom}) we have $\displaystyle\sum_{i\in I} p_i \neq 1$. By 
hypothesis, $P$ is well-typed. By using the typing rules ({\sc TSend}) and 
({\sc TReceive}) of Table~\ref{table:typing}, process~$P$ can be typed by 
using the condition $\displaystyle\sum_{i\in I} p_i = 1$ which contradicts 
the fact that rule ({\sc ECom}) can be applied. The fact that none of the 
reductions introducing errors can be applied means that the result holds.
\end{proof}

\vspace{-2ex}

By the correspondence between local types and global types given in 
Section \ref{subsection:localtypes}, these results guarantee that 
interactions between typed processes follow exactly the interactions 
specified in a global type.

\section{Conclusion}\label{section:conclusion}

We have defined and studied a typing system extending the (synchronous 
version of) multiparty session types to deal also with probabilistic and 
nondeterministic choices.  We proposed a process calculus considering both 
the probabilistic internal choices (sending a value and selecting a label) 
with the nondeterministic external choices (receiving a value and 
branching a process by using a selected value). We used a system inspired 
from the synchronous calculus presented in \cite{Bejleri09}, but avoiding 
the use (and typing) of queues presented in \cite{Bejleri09}. The calculus 
from \cite{Bejleri09} has been modified in \cite{Coppo16} and 
\cite{Scalas17} by using channels with roles, and so eliminating the need 
to use the notation $T@q$ for delegation. However, we feel that this 
notation for delegation makes the rules easier to read; thus, we keep it in 
our typing system.

The approach presented in this paper has attractive properties and features. 
It retains the classical approach (type system), and it is specified in 
such a way to satisfy the axioms of a standard probability theory for 
computing the probability of a behaviour. As far as we know, in the field 
of session types there is no other related work.

Several formal tools have been proposed for probabilistic reasoning.  
Some approaches concern the use of probabilistic logics. In 
\cite{Cooper14}, terms are assigned probabilistically to types via 
probabilistic type judgements, and from an intuitionistic typing system is 
derived a probabilistic logic as a subsytem~\cite{Warrell16}.

In \cite{Varacca07} there are proposed two semantics of a probabilistic 
variant of the $\pi$-calculus. For these, the types are used to identify 
a class of nondeterministic probabilistic behaviours which can preserve 
the compositionality of the parallel operator in the framework of event 
structures. The authors claim to perform an initial step towards a good 
typing discipline for probabilistic name passing by employing Segala 
automata \cite{Segala95} and probabilistic event structures. In comparison
with them, we simplify the approach and work directly with processes, 
giving a probabilistic typing in the context of multiparty session types.

\nocite{*}
\bibliographystyle{eptcs} 
\bibliography{FROM19ref} 
\end{document}